\documentclass{llncs}
\usepackage{amssymb, amsmath,graphicx,graphics,caption,subcaption,enumerate,comment}
\usepackage[utf8]{inputenc} 
\usepackage[T1]{fontenc}
\usepackage{tikz}
\usetikzlibrary{calc}
\usepackage{hyperref}
\usepackage[noadjust,nospace]{cite}

\pagestyle{plain}

\newcommand{\NP}{{\sf NP}}

\newcommand{\vd}{{\sf vd}}
\newcommand{\ed}{{\sf ed}}
\newcommand{\ec}{{\sf ec}}

\newtheorem{open}{Open Problem}

\oddsidemargin=1.4cm
\evensidemargin=1.4cm
\textwidth=13.1cm

\textheight=21cm

\title{Contracting Bipartite Graphs to Paths and Cycles\thanks{This paper received support from 
EPSRC (EP/K025090/1) and the Leverhulme Trust (RPG-2016-258).
An extended abstract of it was published in the proceedings of EuroComb 2017~\cite{DP17}.}}
\author{Konrad K. Dabrowski\inst{1} \and Dani\"el Paulusma\inst{1}}
\institute{School of Engineering and Computing Sciences, Durham University,\\
Science Laboratories, South Road,\\
Durham DH1 3LE, United Kingdom\\
\texttt{\{konrad.dabrowski,daniel.paulusma\}@durham.ac.uk}
}

\begin{document}
\maketitle

\begin{abstract}
Testing if a given graph~$G$ contains the $k$-vertex path~$P_k$ as a minor or as an induced minor is trivial for every fixed integer $k\geq 1$.
However, the situation changes for the problem of checking if a graph can be modified into~$P_k$ by using only edge contractions.
In this case the problem is known to be \NP-complete even if $k=4$.
This led to an intensive investigation for testing contractibility on restricted graph classes.
We focus on bipartite graphs.
Heggernes, van 't Hof, L{\'{e}}v{\^{e}}que and Paul proved that the problem stays \NP-complete for bipartite graphs if $k=6$.
We strengthen their result from $k=6$ to $k=5$.
We also show that the problem of contracting a bipartite graph to the $6$-vertex cycle $C_6$ is \NP-complete.
The cyclicity of a graph is the length of the longest cycle the graph can be contracted to.
As a consequence of our second result, determining the cyclicity of a bipartite graph is \NP-hard.

\medskip
\noindent
{\bf Keywords.} edge contraction, bipartite graph, path.
\end{abstract}

\section{Introduction}

Algorithmic problems for deciding whether the structure of a graph~$H$ appears as a ``pattern'' within the structure of another graph~$G$ are well studied.
Here, the definition of a pattern depends on the set~$S$ of graph operations that we are allowed to use.
Basic graph operations include vertex deletion~\vd, edge deletion \ed\ and edge contraction \ec.
Contracting an edge~$uv$ means that we delete the vertices~$u$ and~$v$ and introduce a new vertex with neighbourhood $(N(u)\cup N(v))\setminus \{u,v\}$ (note that no multiple edges or self-loops are created in this way).
A graph~$G$ contains a graph~$H$ as a {\it minor} if~$H$ can be obtained from~$G$ using operations from $S=\{\vd,\ed,\ec\}$.
For $S=\{\vd,\ec\}$ we say that~$G$ contains~$H$ as an {\it induced minor}, and for $S=\{\ec\}$ we say that~$G$ contains~$H$ as a {\it contraction}.
For a fixed graph~$H$ (that is, $H$ is not part of the input), the corresponding three decision problems are denoted by $H$-{\sc Minor}, $H$-{\sc Induced Minor} and $H$-{\sc Contractibility}, respectively.

A celebrated result by Robertson and Seymour~\cite{RS95} states that the $H$-{\sc Minor} problem can be solved in cubic time for every fixed pattern graph~$H$.
The problems {\sc $H$-Induced Minor} and {\sc $H$-Contractibility} are harder.
Fellows et al.~\cite{FKMP95} gave an example of a graph~$H$ on 68 vertices for which {\sc $H$-Induced Minor} is \NP-complete, whereas Brouwer and Veldman~\cite{BV87} proved that {\sc $H$-Contractibility} is \NP-complete even when $H=P_4$ or $H=C_4$ (the graphs~$C_k$ and~$P_k$ denote the cycle and path on~$k$ vertices, respectively).
Both complexity classifications are still not settled, as there are many graphs~$H$ for which the complexity is unknown (see also~\cite{HKPST12}).

We observe that $P_k$-{\sc Induced Minor} and $C_k$-{\sc Induced Minor} are polynomial-time solvable for all~$k$; it suffices to check if~$G$ contains~$P_k$ as an induced subgraph, that is, if~$G$ is not {\it $P_k$-free}, or if~$G$ contains an induced cycle of length at least~$k$.
In order to obtain similar results to those for minors and induced minors, we need to restrict the input of the $P_k$-{\sc Contractibility} and $C_k$-{\sc Contractibility} problems to some special graph class.

Of particular relevance is the closely related problem of determining the {\it cyclicity}~\cite{Ha99} of a graph, that is, the length of a longest cycle to which a given graph can be contracted.
Cyclicity was introduced by Blum~\cite{Bl82} under the name {\it co-circularity,} due to a close relationship with a concept in topology called circularity (see also~\cite{BBDG80}).
Later Hammack~\cite{Ha99} coined the current name for the concept and gave both structural results and polynomial-time algorithms for a number of special graph classes.
He also proved that the problem of determining the cyclicity is \NP-hard for general graphs~\cite{Ha02}.

Van 't Hof, Paulusma and Woeginger~\cite{HPW09} proved that the $P_4$-{\sc Contractibility} problem is \NP-complete for $P_6$-free graphs, but polynomial-time solvable for $P_5$-free graphs.
Their results can be extended in a straightforward way to obtain a complexity dichotomy for $P_k$-{\sc Contractibility} restricted to $P_\ell$-free graphs except for one missing case, namely when $k=5$ and $\ell=6$.
Fiala, Kami\'nski and Paulusma~\cite{FKP13} proved that $P_k$-{\sc Contractibility} is \NP-complete on line graphs (and thus for claw-free graphs) for $k\geq 7$ and polynomial-time solvable on claw-free graphs (and thus
for line graphs) for $k\leq 4$.
The problems of determining the computational complexity for the missing cases $k=5$ and $k=6$ were left open.
The same authors also proved that $C_6$-{\sc Contractibility} is \NP-complete for claw-free graphs~\cite{Ha02}, which implies that determining the cyclicity of a claw-free graph is \NP-hard.

Hammack~\cite{Ha99} proved that {\sc $C_k$-Contractibility} is polynomial-time solvable on planar graphs for every $k\geq 3$.
Later, Kami\'nski, Paulusma and Thilikos~\cite{KPT10} proved that {\sc $H$-Contractibility} is polynomial-time solvable on planar graphs for every graph~$H$.
Golovach, Kratsch and Paulusma~\cite{GKP13} proved that the {\sc $H$-Contractibility} problem is polynomial-time solvable on AT-free graphs for every triangle-free graph~$H$.
Hence, as $C_3$-{\sc Contractibility} is readily seen to be polynomial-time solvable for general graphs, {\sc $C_k$-Contractibility} and {\sc $P_k$-Contractibility} are polynomial-time solvable on AT-free graphs for every integer $k\geq 3$.
Heggernes et al.~\cite{HHLP14} proved that $P_k$-{\sc Contractibility} is polynomial-time solvable on chordal graphs for every $k\geq 1$.
Later, Belmonte et al.~\cite{BGHHKP14} proved that {\sc $H$-Contractibility} is polynomial-time solvable on chordal graphs for every graph~$H$.
Heggernes et al.~\cite{HHLP14} also proved that $P_6$-{\sc Contractibility} is \NP-complete even for the class of bipartite graphs.

\subsection*{Research Question} 

We consider the class of bipartite graphs, for which we still have a limited understanding of the {\sc $H$-Contractibility} problem.
In contrast to a number of other graph classes, as discussed above, bipartite graphs are not closed under edge contraction, which means that getting a handle on the $H$-{\sc Contractibility} problem is more difficult.
We therefore focus on the $H=P_k$ and $H=C_k$ cases of the following underlying research question for $H$-{\sc Contractibility} restricted to bipartite graphs:

\medskip
\noindent
{\it Do the computational complexities of $H$-{\sc Contractibility} for general graphs and bipartite graphs coincide for every graph~$H$?}

\medskip
\noindent
This question belongs to a more general framework, where we aim to research whether for graph classes not closed under edge contraction, one is still able to obtain ``tractable'' graphs~$H$, for which the $H$-{\sc Contractibility} problem is \NP-complete in general.
For instance, claw-free graphs are not closed under edge contraction.
However, there does exist a graph~$H$, namely $H=P_4$, such that $H$-{\sc Contractibility} is polynomial-time solvable on claw-free graphs and \NP-complete for general graphs.
Hence, being claw-free at the start is a sufficiently strong property for $P_4$-{\sc Contractibility} to be polynomial-time solvable, even though applying contractions might take us out of the class.
It is not known whether being bipartite at the start is also sufficiently strong.

\subsection*{Our Contribution}
We recall that the $H$-{\sc Contractibility} problem is already \NP-hard if $H=C_4$ or $H=P_4$.
Hence, with respect to our research question we will need to consider small graphs~$H$.
While we do not manage to give a conclusive answer, we do improve upon the aforementioned result from 
Heggernes~et al.~\cite{HHLP14} on bipartite graphs by showing in Section~\ref{s-main} that even $P_5$-{\sc Contractibility} is \NP-complete for bipartite graphs.

\begin{theorem}\label{t-main}
$P_5$-{\sc Contractibility} is \NP-complete for bipartite graphs.
\end{theorem}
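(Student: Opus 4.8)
The plan is to reduce from a symmetric satisfiability problem, after first recording the standard reformulation of contractibility. Recall that $G$ is $P_5$-contractible if and only if $V(G)$ admits a \emph{$P_5$-witness structure}: a partition into five nonempty sets $W_1,\dots,W_5$, each inducing a connected subgraph of $G$, such that $W_i$ and $W_j$ are joined by an edge precisely when $|i-j|=1$. Membership in \NP\ is immediate, since a nondeterministic algorithm can guess the partition and check connectivity of each $W_i$ together with the required adjacencies and non-adjacencies in polynomial time. All the work therefore lies in proving \NP-hardness, and the whole difficulty is to do this while keeping the constructed instance bipartite, because (unlike for several of the hereditary classes mentioned above) contraction does not preserve bipartiteness.

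For the hardness reduction I would exploit the reversal symmetry of $P_5$, which swaps $W_1\leftrightarrow W_5$ and $W_2\leftrightarrow W_4$ and fixes $W_3$. This symmetry matches the structure of \emph{Not-All-Equal $3$-Satisfiability} (equivalently, hypergraph $2$-colourability), which is \NP-complete and is invariant under simultaneously flipping all truth values. Given a formula $\varphi$, I would build a bipartite graph $G_\varphi$ from three parts: a rigid \emph{skeleton}, a small connected bipartite gadget whose vertices are forced into the bags $W_1,W_3,W_5$ (and partially $W_2,W_4$) in every witness structure; a \emph{variable gadget} for each variable, designed so that its ``free'' vertex can be placed on the left side ($W_2$) or the right side ($W_4$) of the skeleton but nowhere else; and a \emph{clause gadget} for each clause, attached to the variable gadgets of its three variables.

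The intended correspondence is that placing a variable's free vertex in $W_2$ encodes \emph{true} and placing it in $W_4$ encodes \emph{false}. The clause gadget is then wired so that each branch set can remain connected, and each forbidden adjacency (most importantly that between $W_2$ and $W_4$, between $W_1$ and $W_3$, between $W_3$ and $W_5$, and between $W_1$ and $W_5$) can be avoided, exactly when the three literals of the clause are not all placed on the same side, i.e.\ when the clause is satisfied in the not-all-equal sense. A satisfying not-all-equal assignment then yields a valid partition, and conversely the placement of the variable vertices in any witness structure reads off such an assignment. Bipartiteness is maintained throughout by making every gadget bipartite and attaching gadgets only along edges that respect a global $2$-colouring; keeping the skeleton bipartite while still forcing its bags is the first point requiring care.

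The main obstacle is the \emph{soundness} direction: showing that $G_\varphi$ has \emph{no unintended} $P_5$-witness structure. With only five bags and six forbidden adjacencies, one must prove that the skeleton is genuinely rigid and that the variable gadgets can neither ``leak'' across $W_3$ nor collapse onto a single side, so that every witness structure really does induce a left/right split of the variables. The shortness of the target path makes this harder than the $P_6$ case of Heggernes et al., because there is less room to separate gadgets, and the extra requirement that $G_\varphi$ be bipartite removes the odd-cycle gadgets that would otherwise simplify the rigidity arguments. Completeness (turning a not-all-equal assignment into a witness structure) is the routine direction; I expect essentially all of the case analysis to go into establishing rigidity and ruling out degenerate contractions.
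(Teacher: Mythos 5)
Your proposal correctly identifies the source problem (the paper also reduces from {\sc Hypergraph $2$-Colourability}, which is the symmetric problem you describe) and the intended encoding (a left/right split of the elements across the middle of the path). However, as written it is a plan rather than a proof: the ``rigid skeleton'', the ``variable gadgets'' and the ``clause gadgets'' are never specified, and you explicitly defer the soundness argument --- which you yourself identify as the entire difficulty --- to unspecified future case analysis. A reduction of this kind stands or falls on (a) an explicit bipartite construction and (b) a concrete device that tames the space of possible witness structures, and neither is present. In particular, nothing in your sketch explains how you will force the two end bags of the path to be what you intend them to be, which is the first thing any soundness argument must establish.

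For comparison, the paper's construction contains no separate clause gadgets at all: it takes the incidence graph of the hypergraph, adds a mirrored copy ${\cal S}'$ of the hyperedge side, subdivides the edges on one side, and attaches a handful of extra vertices ($q^*$, $u_1$, $u_2$, $v$, $w$) so that the resulting graph has diameter exactly~$4$. The rigidity you are worried about then comes essentially for free from two facts: a known lemma stating that a graph is $P_\ell$-contractible if and only if it has a $P_\ell$-suitable pair (two vertices that can serve as singleton end bags), and the observation that such a pair must be at distance at least~$4$, which in a diameter-$4$ graph leaves only three candidate pairs to analyse. Two of the three cases are killed by short connectivity arguments, and in the surviving case the connectivity of the two middle bags $W(p_3)\supseteq{\cal S}$ and $W(p_4)\supseteq{\cal S}'$ forces each hyperedge to be ``hit'' from both sides of the split of $Q$, which is exactly the $2$-colouring condition. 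If you want to complete your proposal, you need an analogue of this global distance argument (or some other mechanism) to replace the unspecified rigidity of your skeleton; without it the soundness direction, and hence the \NP-hardness proof, is missing.
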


We also have the following result, which we prove in Section~\ref{s-main2}.

\begin{theorem}\label{t-main2}
The $C_6$-{\sc Contractibility} problem is \NP-complete for bipartite graphs.
\end{theorem}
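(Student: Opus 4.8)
The plan is to prove the two directions separately: membership in \NP, which is routine, and \NP-hardness, which is the substance.

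For membership, recall that $G$ contracts to $C_6$ if and only if $V(G)$ admits a \emph{$C_6$-witness structure}: a partition into six non-empty \emph{bags} $W_1,\dots,W_6$, indexed by the vertices of $C_6$, such that each $G[W_i]$ is connected, consecutive bags (indices taken modulo~$6$) are joined by at least one edge, and non-consecutive bags are joined by no edge. Such a partition is a polynomially verifiable certificate (connectivity by a search in each bag, and the adjacency and non-adjacency conditions by inspection), so the problem lies in \NP.

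For hardness, I would give a polynomial reduction from a suitable \NP-complete satisfiability problem, say a monotone not-all-equal or $1$-in-$3$ variant of 3-SAT, to $C_6$-{\sc Contractibility} restricted to bipartite graphs. From a formula~$\phi$ I would build a bipartite graph $G_\phi$ consisting of a rigid \emph{frame} together with \emph{variable} and \emph{clause} gadgets. The frame is a fixed bipartite configuration wired so that, in every valid $C_6$-witness structure of $G_\phi$, its vertices are forced (up to the symmetries of $C_6$) into the six prescribed bags; this pins down the cyclic ``skeleton'' of the partition. Each variable gadget attaches to the frame so that the only remaining freedom for its vertices is a binary choice between two admissible bags, encoding a truth value, and each clause gadget is connected to the gadgets of its three literals so that the bags can be made simultaneously connected and correctly separated precisely when the clause is satisfied. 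Using the target $C_6$, whose even length is compatible with a bipartite construction, gives enough room to route the gadgets while keeping the whole graph two-colourable.

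I expect two intertwined difficulties to form the core of the argument. First, because edge contraction can create odd cycles, bipartite graphs are not closed under contraction, so every gadget must itself be two-colourable while still rigidly enforcing the intended bag assignments; this constraint is exactly what makes the bipartite case harder than the general one and rules out the simplest gadget designs. Second, contraction imposes a two-sided requirement: for non-consecutive pairs one must guarantee the \emph{absence} of edges between bags, while simultaneously guaranteeing the \emph{connectivity} of each bag, and these two demands pull against each other. The equivalence ``$G_\phi$ contracts to $C_6$ if and only if $\phi$ is satisfiable'' then splits into an easy completeness direction (a satisfying assignment yields a correctly separated, connected partition) and a more delicate soundness direction, where I must show the frame admits no unintended witness structures. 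I anticipate the soundness case analysis---ruling out degenerate ways in which the frame vertices could be redistributed among the six bags, or in which a bag could ``wrap around'' the cycle---to be the main obstacle, and would handle it by a careful structural analysis of how the high-degree frame vertices constrain the bag assignments of their neighbours.
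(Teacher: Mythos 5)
Your membership argument is fine and matches the paper's. The hardness half, however, is not a proof but a declaration of intent: you say you \emph{would} reduce from a 3-SAT variant, \emph{would} build a frame plus variable and clause gadgets, and \emph{anticipate} that the soundness case analysis will be the main obstacle --- but no gadget is specified, no reduction is defined, and neither direction of the equivalence is argued. The two difficulties you correctly identify (keeping every gadget bipartite, and the tension between bag connectivity and the required absence of edges between non-consecutive bags) are precisely where such constructions succeed or fail, so leaving them unresolved means the entire hardness proof is missing. As it stands there is nothing to check for correctness.

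For comparison, the paper does not build SAT gadgets from scratch. It reduces from {\sc Hypergraph $2$-Colourability} and, rather than designing a new frame, it recycles the known hardness construction of Heggernes et al.\ for $P_6$-{\sc Contractibility} on bipartite graphs: starting from the graph $G$ used for the $P_5$ result, it deletes $q^*$ and $u_2$ and adds a single new vertex $x$ adjacent to $v$ and $w$, obtaining a bipartite graph $G'$ such that $G'-vx$ is exactly the $P_6$-hardness instance. The whole soundness argument then collapses to a short distance/degree analysis showing that in any $C_6$-witness structure one may assume $W(c_1)=\{v\}$ and $W(c_6)=\{x\}$, at which point the $C_6$ question for $G'$ is equivalent to the $P_6$ question for $G'-vx$ and the known lemma applies. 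If you want to pursue your own route, the lesson from the paper is that the ``frame'' you are looking for can be as small as one extra vertex closing an already-hard path construction into a cycle, which sidesteps almost all of the case analysis you were bracing for.
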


We observe that if a graph can be contracted to~$C_k$ for some integer $k\geq 3$, it can also be contracted to~$C_\ell$ for any integer $3\leq \ell\leq k$.
Hence, as an immediate consequence of Theorem~\ref{t-main2}, we obtain the following result.

\begin{corollary}
The problem of determining whether the cyclicity of a bipartite graph is at least~$6$ is \NP-complete.
\end{corollary}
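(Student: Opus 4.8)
The plan is to derive the corollary directly from Theorem~\ref{t-main2}, using the monotonicity observation recorded immediately above its statement. First I would establish the following equivalence: a bipartite graph~$G$ has cyclicity at least~$6$ if and only if~$G$ can be contracted to~$C_6$. For the forward direction, if the cyclicity of~$G$ is at least~$6$ then, by definition of cyclicity as the length of a longest cycle to which~$G$ contracts, there is some integer $k\geq 6$ such that~$G$ contracts to~$C_k$; applying the observation with $\ell=6$ then yields that~$G$ contracts to~$C_6$. The reverse direction is immediate, since a contraction of~$G$ to~$C_6$ already witnesses cyclicity at least~$6$. Consequently, the decision problem ``is the cyclicity of~$G$ at least~$6$?'' coincides exactly with $C_6$-{\sc Contractibility} when the input is restricted to bipartite graphs.

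Given this equivalence, \NP-hardness is inherited directly from Theorem~\ref{t-main2}: any instance of $C_6$-{\sc Contractibility} on bipartite graphs is, verbatim, an instance of the cyclicity question, so the \NP-hardness of the former transfers to the latter without any further construction.

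For membership in \NP, I would exhibit a polynomially verifiable certificate. A natural witness is a \emph{$C_6$-witness structure}: a partition of $V(G)$ into six nonempty sets $W_1,\dots,W_6$ such that each induced subgraph $G[W_i]$ is connected and, after contracting each $W_i$ to a single vertex, the resulting graph is precisely $C_6$ (equivalently, there is an edge between $W_i$ and $W_j$ if and only if $i$ and $j$ are consecutive modulo~$6$). Checking that each bag induces a connected subgraph and that the adjacencies between bags realise exactly the cycle $C_6$ can be done in polynomial time, so the cyclicity question lies in \NP by the equivalence above. Combining this with the \NP-hardness from the previous paragraph gives \NP-completeness.

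I expect no genuine obstacle here, since the substantive work is already contained in the proof of Theorem~\ref{t-main2}; the only point requiring a little care is the forward direction of the equivalence, where one must invoke the monotonicity observation to collapse ``cyclicity at least~$6$'' down to the single target $C_6$, rather than having to deal with an unbounded family $\{C_k : k\geq 6\}$.
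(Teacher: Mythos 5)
Your proposal is correct and matches the paper's own reasoning: the paper likewise derives the corollary immediately from Theorem~\ref{t-main2} via the observation that contractibility to~$C_k$ implies contractibility to~$C_\ell$ for $3\leq\ell\leq k$, which makes ``cyclicity at least~$6$'' coincide with $C_6$-{\sc Contractibility}. Your additional remarks on \NP-membership via a $C_6$-witness structure are consistent with how the paper handles membership for its main theorems.
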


\section{A Known Lemma}\label{s-pre}

A graph~$G$ contains a graph~$H$ as a contraction if and only if for every vertex~$h$ in~$V_H$ there is a nonempty subset $W(h)\subseteq V_G$ of vertices in~$G$ such that:
\begin{itemize} 
\item[$\bullet$] $G[W(h)]$ is connected; 
\item [$\bullet$] the set ${\cal W}=\{W(h)\; |\; h\in V_H\}$ is a partition of~$V_G$; and
\item[$\bullet$] for every $h_i,h_j\in V_H$, there is at least one edge between the witness sets~$W(h_i)$ and~$W(h_j)$ in~$G$ if and only if~$h_i$ and~$h_j$ are adjacent in~$H$.
\end{itemize}
The set~$W(h)$ is an $H$-{\it witness set} of~$G$ for~$h$, and~${\cal W}$ is said to be an {\it $H$-witness structure} of~$G$.
If for every $h\in V_H$ we contract the vertices in~$W(h)$ to a single vertex, then we obtain the graph~$H$.
Witness sets~$W(h)$ may not be uniquely defined, as there could be different sequences of edge contractions that modify~$G$ into~$H$.
A pair of vertices $(u,v)$ of a graph~$G$ is $P_\ell$-\emph{suitable} for some integer $\ell\geq 3$ if and only if~$G$ has a $P_\ell$-witness structure~${\cal W}$ with $W(p_1)=\{u\}$ and $W(p_\ell)=\{v\}$, where $P_\ell=p_1\dots p_\ell$.
See Figure~\ref{f-p4witness} for an example.

\begin{figure}
  \centering
  \includegraphics[scale=1.2]{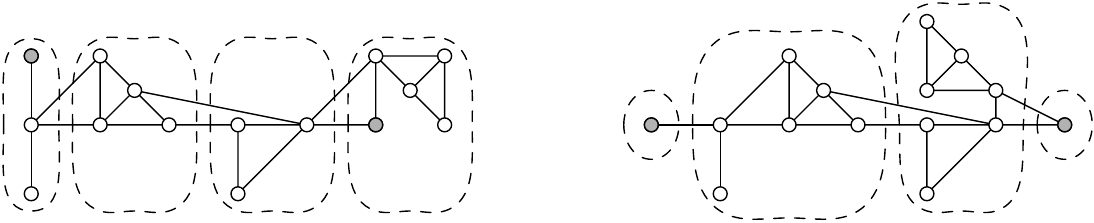}
  \caption{Two $P_4$-witness structures of a graph; the grey vertices form a $P_4$-suitable pair~\cite{HPW09}.}\label{f-p4witness}
\end{figure}

\begin{lemma}[\cite{HPW09}]\label{l-outer}
For $\ell\geq 3$, a graph~$G$ is $P_\ell$-contractible if and only if~$G$ has a $P_\ell$-suitable pair.
\end{lemma}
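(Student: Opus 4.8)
The forward implication is immediate from the definitions: if $(u,v)$ is a $P_\ell$-suitable pair, then by definition $G$ admits a $P_\ell$-witness structure, so $G$ is $P_\ell$-contractible. The plan is therefore to concentrate on the reverse implication. Suppose $G$ is $P_\ell$-contractible and fix any $P_\ell$-witness structure $\mathcal{W}=\{W(p_1),\dots,W(p_\ell)\}$, where $P_\ell=p_1\cdots p_\ell$. I would aim to transform $\mathcal{W}$ into a $P_\ell$-witness structure in which both $W(p_1)$ and $W(p_\ell)$ are singletons; the resulting pair of endpoint vertices is then $P_\ell$-suitable by definition.

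The core of the argument is a single reduction step applied to the endpoint $p_1$ (and symmetrically to $p_\ell$). If $|W(p_1)|=1$ there is nothing to do, so assume $|W(p_1)|\geq 2$. I would take a spanning tree $T$ of the connected graph $G[W(p_1)]$, select a suitable leaf $u$ of $T$, and then set $W'(p_1):=\{u\}$ and $W'(p_2):=W(p_2)\cup(W(p_1)\setminus\{u\})$, leaving all other witness sets unchanged. The remaining task is to verify that this is again a valid $P_\ell$-witness structure. Because $u\in W(p_1)$ and, in $P_\ell$, the vertex $p_1$ is adjacent only to $p_2$, every neighbour of $u$ other than $u$ itself lies in $W(p_1)\cup W(p_2)=\{u\}\cup W'(p_2)$; hence $\{u\}$ has the correct adjacency pattern, and since $u$ is a tree-leaf it has a neighbour in $W(p_1)\setminus\{u\}\subseteq W'(p_2)$, which supplies the required edge between $W'(p_1)$ and $W'(p_2)$. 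The adjacencies of $W'(p_2)$ to $W(p_3)$ and to the rest of the path are inherited unchanged from $W(p_2)$.

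The main obstacle, and the only point at which the choice of $u$ matters, is proving that $G[W'(p_2)]$ is connected. Here $G[W(p_2)]$ and, after deleting a leaf, $G[W(p_1)\setminus\{u\}]$ are each connected, so it suffices to retain at least one edge between them. I would let $X$ be the nonempty set of vertices of $W(p_1)$ having a neighbour in $W(p_2)$ and choose the leaf $u$ so that $W(p_1)\setminus\{u\}$ still meets $X$: if $|X|\geq 2$ any leaf works, whereas if $X=\{x\}$ I would use the fact that a tree on at least two vertices has at least two leaves to pick a leaf $u\neq x$. Either way, some vertex of $X$ survives in $W(p_1)\setminus\{u\}$, giving an edge to $W(p_2)$ and hence the connectivity of $W'(p_2)$.

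Finally I would apply this step to $W(p_1)$ and then, in the resulting structure, to $W(p_\ell)$, viewing the path in reverse so that the excess vertices of $W(p_\ell)$ are absorbed into $W(p_{\ell-1})$. The two steps do not interfere, because $\ell\geq 3$ guarantees $p_1\neq p_{\ell-1}$ and $p_\ell\neq p_2$, so shrinking one endpoint never disturbs the witness set of the other; this is precisely where the hypothesis $\ell\geq 3$ is needed. The outcome is a $P_\ell$-witness structure in which $W(p_1)$ and $W(p_\ell)$ are both singletons, which exhibits the desired $P_\ell$-suitable pair and completes the reverse implication.
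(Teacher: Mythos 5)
Your proof is correct. The paper gives no proof of this lemma---it is imported from~\cite{HPW09}---but your argument (shrink the end witness set to a single vertex by absorbing all but one vertex into the neighbouring set, choosing the surviving singleton as a spanning-tree leaf picked so that a vertex with a neighbour in $W(p_2)$ remains, then repeat symmetrically at $p_\ell$) is exactly the standard argument from that reference, and all the verifications you list, including the case distinction on whether the set $X$ of vertices of $W(p_1)$ adjacent to $W(p_2)$ is a singleton and the observation that $\ell\geq 3$ keeps the two endpoint reductions independent, go through.
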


\section{The Proof of Theorem~\ref{t-main}}\label{s-main}

In this section we prove that $P_5$-{\sc Contractibility} is \NP-complete for bipartite graphs.
The $P_5$-{\sc Contractibility} problem restricted to bipartite graphs is readily seen to be in \NP.
Hence what remains is to prove \NP-hardness.

Let $(Q,{\mathcal S})$ be a hypergraph, where~$Q$ is some set of {\it elements} and~${\mathcal S}$ is a set of {\it hyperedges}, which are subsets of~$Q$.
A {\it $2$-colouring} of $(Q,{\mathcal S})$ is a partition $(Q_1,Q_2)$ of~$Q$ with $Q_1\cap S \ne\emptyset$ and $Q_2 \cap S \ne\emptyset$ for every $S\in {\mathcal S}$.
The corresponding decision problem is called {\sc Hypergraph $2$-Colourability} and is well known to be \NP-complete (see~\cite{GJ79}).
Just as in the proof of~\cite{BV87} for \NP-hardness of {\sc $P_4$-Contractibility} for general graphs, we will reduce from {\sc Hypergraph $2$-Colourability}.
In fact, just as the construction in the proof~\cite{HHLP14} for $P_6$-{\sc Contractibility} for bipartite graphs, our construction borrows elements from~\cite{BV87}, but is more advanced.

Let $(Q,{\mathcal S})$ be a hypergraph with $Q=\{q_1,\ldots,q_m\}$ and ${\mathcal S}=\{S_1,\ldots,S_n\}$.
We may assume without loss of generality that $n\geq 2$, $S_i\neq \emptyset$ for each~$S_i$ and $S_n=Q$.
Given the pair $(Q,{\mathcal S})$, we will construct a graph $G=(V,E)$ in the following way; see Figure~\ref{fig:QS} for an example.
\begin{itemize}
\item
Construct the {\it incidence graph} of $(Q,{\mathcal S})$.
This is a bipartite graph with partition classes~$Q$ and~${\cal S}$, and an edge between two vertices~$q_i$ and~$S_j$ if and only if $q_i\in S_j$.
\item
Add a set ${\mathcal S}'=\{S_1',\ldots,S_n'\}$ of~$n$ new vertices.
Add an edge between~$q_i$ and~$S_j'$ if and only if $q_i\in S_j$.
We say that~$S_j'$ is a {\it copy} of~$S_j$ and say that it represents a hyperedge that contains the same elements as~$S_j$.
\item
Add an edge between every~$S_j$ and~$S_k'$, that is, the subgraph induced by ${\cal S}\cup {\cal S}'$ is complete bipartite.
\item 
Subdivide each edge~$q_iS_j$, that is, remove the edge~$q_iS_j$ and replace it by a new vertex~$q^i_j$ with edges~$q^i_jq_i$ and~$q^i_jS_j$.
Let~$Q'$ consist of all the vertices~$q^i_j$.
\item
Add three new vertices $q^*$, $u_1$ and~$u_2$ and edges $q^*u_1$, $q^*u_2$.
\item Add an edge between~$q^*$ and every~$q^i_j$.
\item Add an edge between~$u_1$ and every~$S_j$, and an edge between~$u_2$ and every~$S_j$.
\item Add two new vertices~$v$ and~$w$.
Add the edges~$u_1v$ and~$u_2v$, and also an edge between~$w$ and every~$S_j'$.
\end{itemize}

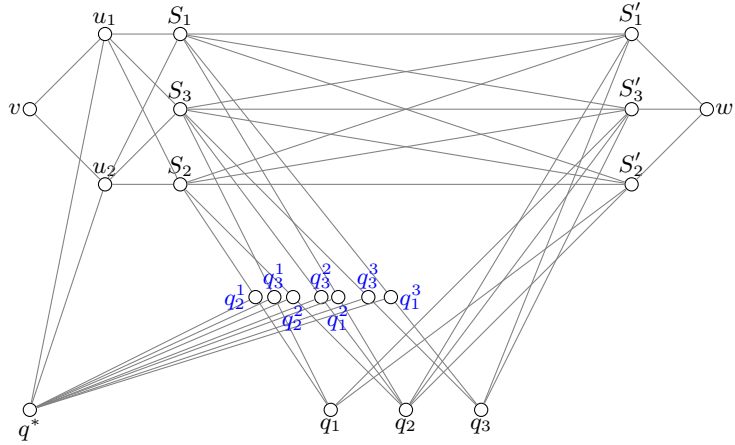
\begin{figure}
\begin{center}
\begin{tikzpicture}
\tikzstyle{vertex}=[circle,draw=black, fill=black, minimum size=5pt, inner sep=1pt]
\tikzstyle{edge} =[draw,-,black]
\foreach \pos/\name / \label / \posn / \dist  in {
{(-1,0)/v/$v$/{left}/0},
{(0,1)/u1/$u_1$/{above}/0},
{(0,-1)/u2/$u_2$/{above}/0},
{(-1,-4)/q/$q^*$/{below}/0},
{(1,1)/s1/$S_1$/{above}/0},
{(1,0)/s2/$S_3$/{above}/0},
{(1,-1)/s3/$S_2$/{above}/0},
{(7,1)/ss1/$S'_1$/{above}/0},
{(7,0)/ss2/$S'_3$/{above}/0},
{(7,-1)/ss3/$S'_2$/{above}/0},
{(8,0)/w/$w$/{right}/0},
{(3,-4)/q1/$q_1$/{below}/0},
{(4,-4)/q2/$q_2$/{below}/0},
{(5,-4)/q3/$q_3$/{below}/0}}
      {\node[vertex, fill=white] (\name) at \pos {};
       \node [\posn=\dist] at (\name) {\label};
       }
\foreach \source/ \dest  in {v/u1, v/u2, q/u1, q/u2, u1/s1, u1/s2, u1/s3, u2/s1, u2/s2, u2/s3, s1/ss1, s1/ss2, s1/ss3, s2/ss1, s2/ss2, s2/ss3, s3/ss1, s3/ss2, s3/ss3, ss1/w, ss2/w, ss3/w, s2/q3, s3/q2, s3/q1, ss2/q3, ss3/q2, ss3/q1, s1/q3, s1/q2, ss1/q3, ss1/q2, s2/q2, s2/q1, ss2/q2, ss2/q1}
       \path[edge, black, thin, color=black!50!white] (\source) --  (\dest);     

\foreach \pos/\name / \label / \posn / \dist  in {
{($(q1)!0.5!(s3)$)/q13/$q^1_3$/{left}/0},
{($(q2)!0.5!(s3)$)/q23/$q^2_3$/{below}/0},
{($(q1)!0.375!(s2)$)/q12/$q^1_2$/{above}/0},
{($(q2)!0.375!(s2)$)/q22/$q^2_2$/{above}/0},
{($(q3)!0.375!(s2)$)/q32/$q^3_2$/{above}/0},
{($(q3)!0.3!(s1)$)/q31/$q^3_1$/{right}/0},
{($(q2)!0.3!(s1)$)/q21/$q^2_1$/{below}/0}}
      { \node[vertex, fill=white] (\name) at \pos {};
       }

\foreach \source/ \dest  in {q13/q, q23/q, q12/q, q22/q, q32/q, q31/q, q21/q}
       \path[edge, black, thin, color=black!50!white] (\source) --  (\dest);     

\foreach \pos/\name / \label / \posn / \dist  in {
{($(q1)!0.5!(s3)$)/q13/$q^1_2$/{left}/0},
{($(q2)!0.5!(s3)$)/q23/$q^2_2$/{below}/0},
{($(q1)!0.375!(s2)$)/q12/$q^1_3$/{above}/0},
{($(q2)!0.375!(s2)$)/q22/$q^2_3$/{above}/0},
{($(q3)!0.375!(s2)$)/q32/$q^3_3$/{above}/0},
{($(q3)!0.3!(s1)$)/q31/$q^3_1$/{right}/0},
{($(q2)!0.3!(s1)$)/q21/$q^2_1$/{below}/0}}
      { 
       \node [\posn=\dist, color=blue] at (\name) {\label};
       }
\end{tikzpicture}
\end{center}
\caption{\label{fig:QS}The graph~$G$ corresponding to the instance of {\sc Hypergraph $2$-Colourability} with $Q=\{q_1,q_2,q_3\}$ and ${\cal S}= \{\{q_2,q_3\},\{q_1,q_2\},\{q_1,q_2,q_3\}\}$.}
\end{figure}

The {\it distance} between two vertices in a graph is the number of edges of a shortest path between them.
The {\it diameter} of a graph is the maximum distance over all pairs of vertices in it.
We note that the graph~$G$ may have arbitrarily large induced paths (alternating between vertices in~$Q$ and~${\cal S}$).
However, as we will check in the proof of Lemma~\ref{l-hard}, $G$ has diameter~$4$, and this property will be crucial.
We first prove the following lemma.

\begin{lemma}\label{l-bipartite}
The graph~$G$ is bipartite.
\end{lemma}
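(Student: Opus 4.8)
The plan is to exhibit an explicit bipartition $(A,B)$ of $V(G)$ and then verify that every edge introduced in the construction runs between $A$ and $B$. Since a graph is bipartite precisely when it has no odd cycle, equivalently when its vertices admit a $2$-colouring with no monochromatic edge, producing such a colouring and checking it edge-by-edge is the cleanest route.

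First I would assign the colours by inspecting the construction. The incidence graph is bipartite with classes $Q$ and $\mathcal S$, the subdivision vertices $Q'$ naturally sit ``between'' $Q$ and $\mathcal S$, and the auxiliary vertices $q^*,u_1,u_2,v,w,\mathcal S'$ must be slotted in consistently. A natural guess for one class $A$ is $\{q^*\}\cup Q\cup \mathcal S'\cup\{v\}$, and for the other class $B$ is $Q'\cup\mathcal S\cup\{u_1,u_2,w\}$. I would then walk through the eight bullet points of the construction and confirm each edge goes between $A$ and $B$: the subdivided incidence edges $q^i_jq_i$ and $q^i_jS_j$ (with $q^i_j\in B$ when incident to $q_i\in A$, and $q^i_j\in A$ when incident to $S_j\in B$ — here I must be careful, since the $q^i_j$ are \emph{not} all on the same side); the complete bipartite part between $\mathcal S\subseteq B$ and $\mathcal S'\subseteq A$; the edges $q^*u_1,q^*u_2$ with $q^*\in A$, $u_1,u_2\in B$; the edges from $q^*$ to each $q^i_j$; the edges from $u_1,u_2$ to each $S_j$; and finally $u_1v,u_2v$ and the edges from $w$ to each $S_j'$.

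The main obstacle, and the step I would flag as needing genuine care rather than routine checking, is the placement of the subdivision vertices $Q'$. Each $q^i_j$ is adjacent to $q_i$, to $S_j$, and to $q^*$, so a single global colour for all of $Q'$ cannot work: its neighbours $q_i$ and $q^*$ lie on the same side, while $S_j$ lies on the other. The resolution is that the colour of $q^i_j$ is forced to be opposite to that of $q_i$ and $q^*$ (placing $q^i_j\in B$) and simultaneously opposite to $S_j$ (placing $q^i_j$ opposite $\mathcal S$), which is consistent precisely because I put $Q$, $q^*$ on one side and $\mathcal S$ on the other. So rather than colouring $Q'$ uniformly, I would define $A$ and $B$ so that each $q^i_j\in B$ together with $\mathcal S$, $u_1$, $u_2$, $w$, and confirm that this is compatible with all three of its incident edges.

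Concretely, I would set $A=\{q^*\}\cup Q\cup \mathcal S'\cup\{v\}$ and $B=Q'\cup\mathcal S\cup\{u_1,u_2,w\}$, observe that $(A,B)$ partitions $V(G)$, and then verify that no edge has both endpoints in $A$ or both in $B$ by the bullet-by-bullet case analysis above; since every edge of $G$ arises from one of those bullets and each is checked to cross between $A$ and $B$, the graph is bipartite.
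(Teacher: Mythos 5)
Your overall strategy (exhibit an explicit bipartition and check it edge by edge) is exactly the paper's, but the bipartition you wrote down is not a valid one, and the reasoning you give to justify the placement of $Q'$ is internally contradictory. With $A=\{q^*\}\cup Q\cup \mathcal{S}'\cup\{v\}$ and $B=Q'\cup\mathcal{S}\cup\{u_1,u_2,w\}$, several edges of $G$ are monochromatic: each subdivision edge $q^i_jS_j$ has both endpoints in $B$; each edge $u_1S_j$ and $u_2S_j$ has both endpoints in $B$; and each edge $q_iS_j'$ has both endpoints in $A$. The source of the error is your statement that the colour of $q^i_j$ must be opposite to $q_i$ and $q^*$ \emph{and} opposite to $S_j$, and that this ``is consistent precisely because I put $Q$, $q^*$ on one side and $\mathcal{S}$ on the other.'' That is backwards: if $q_i,q^*$ and $S_j$ lie on opposite sides, then the two constraints on $q^i_j$ contradict each other and no consistent colour exists. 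Consistency requires $Q$, $q^*$ and $\mathcal{S}$ to all lie on the \emph{same} side, so that every $q^i_j$ goes on the other side; in particular all of $Q'$ does end up in a single class, contrary to your parenthetical remark. Intuitively, subdividing the incidence edges moves $Q$ and $\mathcal{S}$ onto the same side of the bipartition, so you cannot keep the original two-colouring of the incidence graph.

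The correct partition (the one the paper uses) is $A=\{q^*,v,w\}\cup\mathcal{S}\cup Q$ and $B=\{u_1,u_2\}\cup\mathcal{S}'\cup Q'$; relative to your proposal this swaps the sides of $\mathcal{S}$ and $\mathcal{S}'$ and moves $w$ into $A$. One then checks: the edges $q^i_jq_i$, $q^i_jS_j$ and $q^*q^i_j$ join $B$ to $A$; the edges $q_iS_j'$, $S_jS_k'$ and $wS_j'$ join $A$ to $B$; and the edges $q^*u_1$, $q^*u_2$, $u_1v$, $u_2v$, $u_1S_j$, $u_2S_j$ all join $A$ to $B$. So the lemma is true and your method would work, but you must redo the assignment before the edge-by-edge verification goes through.
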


\begin{proof}
We partition~$V$ into $A=\{q^*,v,w\} \cup {\cal S}\cup Q$ and $B=\{u_1,u_2\} \cup {\mathcal S'} \cup Q'$, and note that~$G$ contains no edge between any two vertices in~$A$ and no edge between any two vertices in~$B$.\qed
\end{proof}

\begin{lemma}\label{l-hard}
The hypergraph $(Q,{\mathcal S})$ has a $2$-colouring if and only if the graph~$G$ contains~$P_5$ as a contraction.
\end{lemma}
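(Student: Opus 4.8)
The plan is to invoke Lemma~\ref{l-outer}, so that rather than exhibiting explicit contraction sequences I only need to construct (or analyse) a $P_5$-witness structure $W_1,\dots,W_5$ of $G$, where $P_5=p_1\cdots p_5$, $W_i=W(p_i)$, and the two end sets $W_1,W_5$ are singletons. Throughout I would use the elementary propagation rule that if $x\in W_i$ is adjacent to $y\in W_j$ then $|i-j|\le 1$, together with the fact that each $W_i$ must be connected.

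For the forward direction, given a $2$-colouring $(Q_1,Q_2)$ I would simply write down the witness structure
\[
W_1=\{v\},\quad W_2=\{u_1,u_2,q^*\},\quad W_3={\cal S}\cup Q_2\cup Q',\quad W_4={\cal S}'\cup Q_1,\quad W_5=\{w\},
\]
and verify the three defining properties. Connectivity of $W_2$ is immediate (via the path $u_1q^*u_2$) and the adjacency pattern is routine from the construction. The two substantial verifications are that $W_3$ and $W_4$ are connected, and this is exactly where the assumptions $S_n=Q$ and $Q_i\cap S_j\neq\emptyset$ enter: in $W_4$ every $S_j'$ has a neighbour in $Q_1\cap S_j\neq\emptyset$ and every such $q_i$ is adjacent to $S_n'$, so $W_4$ is connected through $S_n'$; symmetrically, in $W_3$ every $S_j$ reaches some $q_i\in Q_2\cap S_j$ through $q^i_j$, and each such $q_i$ reaches $S_n$ through $q^i_n$, so $W_3$ is connected through $S_n$.

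For the converse I would first establish that $G$ has diameter~$4$ (a short check over the vertex types), so that the end sets of any $P_5$-witness structure form a pair at distance exactly~$4$. I would then compute that the only vertices of eccentricity~$4$ are $v$, $w$, $q^*$ and the elements $q_i$, and that the distance-$4$ pairs among them are $\{v,w\}$, $\{w,q^*\}$ and $\{v,q_i\}$. The last two I would eliminate by short structural contradictions. For $\{w,q^*\}$ (say $W_1=\{q^*\}$): the neighbourhood $N(q^*)=\{u_1,u_2\}\cup Q'$ lies in $W_2$, but one checks that each $q^i_j\in Q'$ then has no neighbour in $W_2$, since its three neighbours $q^*,q_i,S_j$ lie respectively in $W_1$, in $W_3\cup W_4$ and in $W_3$, so $W_2$ is disconnected. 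For $\{v,q_{i_0}\}$: the vertex $q^*$ is forced into $W_2$ as the unique possible common $W_2$-neighbour of $u_1$ and $u_2$, while $q^{i_0}_n\in N(q_{i_0})$ is forced into $W_4$, and the edge $q^*q^{i_0}_n$ then joins $W_2$ to $W_4$---a contradiction. Hence the end pair must be $\{v,w\}$, and the propagation rule forces $\{u_1,u_2\}\subseteq W_2$, then $q^*\in W_2$, then ${\cal S}\subseteq W_3$, ${\cal S}'\subseteq W_4$, $Q'\subseteq W_2\cup W_3$ and $Q\subseteq W_3\cup W_4$. Setting $Q_1=Q\cap W_4$ and $Q_2=Q\cap W_3$, I would then show $(Q_1,Q_2)$ is a $2$-colouring by contradiction: if $Q_1\cap S_j=\emptyset$ then $S_j'$ is isolated in $W_4\supseteq{\cal S}'$ (which has $n\ge 2$ vertices), contradicting connectivity; and if $Q_2\cap S_j=\emptyset$ then every $q^i_j$ with $q_i\in S_j$ is pinned into $W_3$ as a pendant of $S_j$, so the component of $S_j$ in $W_3$ is merely a star, disjoint from the other vertices of ${\cal S}\subseteq W_3$, again contradicting connectivity.

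I expect the main obstacle to be the converse direction, and within it the bookkeeping required to rule out the stray distance-$4$ pairs $\{w,q^*\}$ and $\{v,q_i\}$, and to convert the statements ``$W_3$ is connected'' and ``$W_4$ is connected'' into the two colour-class conditions. The delicate point throughout is that connectivity must be tested using only edges internal to a single witness set; this is precisely what makes the isolated-vertex and pendant arguments go through, and it is what the gadget (the copies ${\cal S}'$, the apex $q^*$ joined to all of $Q'$, and the universal hyperedge $S_n=Q$) is engineered to control.
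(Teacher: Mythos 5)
Your proposal is correct and follows essentially the same route as the paper's proof: the same witness structure for the forward direction (with the roles of $Q_1$ and $Q_2$ swapped, which is immaterial), and for the converse the same reduction via Lemma~\ref{l-outer} and the diameter-$4$ distance analysis to the three candidate end-pairs $\{v,w\}$, $\{q^*,w\}$, $\{v,q_i\}$, eliminating the latter two and extracting the $2$-colouring from the connectivity of $W_3$ and $W_4$ in the first. The only differences are cosmetic: you derive the contradiction in the $\{v,q_i\}$ case from the edge $q^*q^{i_0}_n$ rather than from $d(q^*,q_i)=2$, and in the main case you define the colour classes as $Q\cap W_3$ and $Q\cap W_4$ and verify the hitting property by contradiction, where the paper builds them from explicit connector vertices $q(S_j)$, $q(S_j')$.
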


\begin{proof}
Let $P$ be on a path on five vertices $p_1, \ldots, p_5$ in that order.
First suppose that $(Q,{\mathcal S})$ has a $2$-colouring $(Q_1,Q_2)$.
We define
$W(p_1)=\{v\}$,
$W(p_2)=\{q^*,u_1,u_2\}$,
$W(p_3)={\cal S}\cup Q_1 \cup Q'$,
$W(p_4)= {\cal S}'\cup Q_2$ and
$W(p_5)=\{w\}$.
We note that the sets $W(p_1),\ldots, W(p_5)$ are pairwise disjoint.
Moreover, not only $W(p_1)$, $W(p_2)$ and~$W(p_5)$, but also~$W(p_3)$ and~$W(p_4)$ induce connected subgraphs of~$G$, as~$S_n$ and~$S_n'$ are connected to every vertex in~$Q$ by definition (either via a path of length~2 containing a vertex of $Q'$ or directly via an edge).
We also observe that there are no edges between vertices from~$W(p_1)$ and vertices from $W(p_3)\cup W(p_4)\cup W(p_5)$, no edges between vertices from~$W(p_2)$ and vertices from $W(p_4)\cup W(p_5)$ and no edges between vertices from~$W(p_3)$ and vertices from~$W(p_5)$.
We combine these observations with the existence of edges (for instance, $vu_1$, $u_1S_1$, $S_1S_1'$ and~$S_1'w$) between the two consecutive sets~$W(p_i)$ and~$W(p_{i+1})$ for $i=1,\ldots,4$ to conclude that the sets $W(p_1),\ldots,W(p_5)$ form a $P_5$-witness structure of~$G$.

Now suppose that~$G$ contains~$P_5$ as a contraction.
Then, by Lemma~\ref{l-outer}, we find that~$G$ has a $P_5$-witness structure~${\cal W}$, where $W(p_1)=\{x\}$ and $W(p_5)=\{y\}$ for some vertices~$x$ and~$y$.
We refer to Table~\ref{t-distance} for the distances between vertices of different types.
In this table, entries for a vertex and a set or for two sets display the maximum possible distance between them.
For instance, the entry for~${\cal S}$ and~$Q$ is~$2$, as the maximum distance between a vertex in~${\cal S}$ and a vertex in~$Q$ is~$2$.
We also note, for instance, that the distance between any two vertices in~$Q$ is~$2$, because~$S_n'$ is adjacent to every vertex of~$Q$.

From Table~\ref{t-distance} we can see that there are three possible choices for the pair $\{x,y\}$, which must be of distance at least~$4$ from each other in~$G$, 
namely $\{x,y\}=\{v,q_i\}$ for any~$q_i$, $\{x,y\}=\{q^*,w\}$ or $\{x,y\}=\{v,w\}$.
We discuss each of these cases below.

\begin{table}[h]
\centering
\begin{tabular}{|c|c|c|c|c|c|c|c|c|c|}
\hline
                      &$u_1$ &$u_2$ &$v$ &$w$ &${\cal S}$ &${\cal S}'$ &$Q$ &$Q'$ &$q^*$\\
\hline
       $u_1$      &    0      &   2        &  1      &   3    &      1           &   2              &   3    &     2    &      1   \\
       \hline
       $u_2$     &          &   0        &   1     &  3     &    1             &     2            &    3   &   2      &     1    \\
            \hline
          $v$     &           &           & 0       &    {\bf 4}   &    2             &     3            &  {\bf 4}     &    3     &  2       \\
               \hline
         $w$     &           &           &        &  0     &      2           &   1              &  2     &     3    &  {\bf 4}       \\ 
              \hline
${\cal S}$     &           &           &        &       &       2          &      1           &   2    &   3      &   2      \\
     \hline
${\cal S}'$     &          &           &        &       &                 &      2          &   3    &    2     &  3       \\
     \hline
       $Q$       &          &           &        &       &                 &                 &  2    &    3     &  2       \\
            \hline
       $Q'$      &          &            &        &       &                 &                 &       &    2    &   1      \\
            \hline
       $q^*$    &          &            &        &       &                 &                 &       &         &  0        \\
            \hline
            \end{tabular}\\[3pt]
\caption{The (maximum) distances between two different (types of) vertices in~$G$.}\label{t-distance}
\end{table}

\noindent
{\bf Case~1.} $\{x,y\}=\{v,q_i\}$.\\
Let $x=v$ and $y=q_i$.
From Table~\ref{t-distance} we find that $\{u_1,u_2\}\subseteq W(p_2)$.
Moreover, ${\cal S}\subseteq W(p_3)$, as every vertex in~${\cal S}$ is of distance~$2$ from both~$v$ and~$q_i$.
As~$W(p_2)$ induces a connected subgraph by definition and~$u_1$ is not adjacent to~$u_2$, this means that~$q^*$ must be in~$W(p_2)$.
However, this is not possible as~$q^*$ is of distance~$2$ from~$q_i$, which is in~$W(p_5)$.
Hence Case~1 is not possible.

\medskip
\noindent
{\bf Case~2.} $\{x,y\}=\{q^*,w\}$.\\
Let $x=q^*$ and $y=w$.
From Table~\ref{t-distance} we find that $Q'\cup \{u_1,u_2\} \subseteq W(p_2)$ and that ${\cal S'}\subseteq W(p_4)$.
The latter, combined with the fact that every vertex of~${\cal S}$ is adjacent to every vertex of~${\cal S}'$, implies that ${\cal S}\cap W(p_2)=\emptyset$.
Any path from a vertex in~$Q'$ to a vertex in $\{u_1,u_2\}$ must contain at least one vertex of ${\cal S}\cup \{q^*\}$.
As $({\cal S}\cup \{q^*\})\cap W(p_2)=\emptyset$, this means that~$W(p_2)$ does not induce a connected subgraph.
This violates the definition of a witness structure, so Case~2 is not possible either.

\medskip
\noindent
{\bf Case~3.} $\{x,y\}=\{v,w\}$.\\
Let $x=v$ and $y=w$.
From Table~\ref{t-distance} we find that $\{u_1,u_2\}\subseteq W(p_2)$.
Moreover, ${\cal S}\subseteq W(p_3)$, as every vertex in~${\cal S}$ is of distance~$2$ from both~$v$ and~$w$.
As~$W(p_2)$ must induce a connected subgraph of~$G$ by definition, this means that $q^*\in W(p_2)$.
From Table~\ref{t-distance} we also find that ${\cal S}'\subseteq W(p_4)$.

By definition, $W(p_3)$ must induce a connected subgraph.
Recall that~${\cal S}$ is an independent set.
Hence, for each~$S_j$, we find that~$W(p_3)$ contains at least one vertex not in~${\cal S}$ that connects~$S_j$ to the other vertices of~${\cal S}$ (recall that by assumption, $n\geq 2$, so there is at least one other vertex in~${\cal S}$ not equal to~$S_j$).
As $\{u_1,u_2\}\subseteq W(p_2)$ and ${\cal S}'\subseteq W(p_4)$, such a vertex can only be in~$Q'$ and we denote it by~$q'(S_j)$.
As every vertex in~$Q'$ has only three neighbours and one of them is~$q^*$, which is in~$W(p_2)$, we find that~$q'(S_j)$ must be adjacent to a vertex $q(S_j)\in Q\cap W(p_2)$ in order to connect~$S_j$ to the other vertices of~${\cal S}$.
Note that $q(S_j)=q(S_k)$ is possible for two vertices~$S_j$ and~$S_k$ with $k\neq j$.

The set~$W(p_3)$ also induces a connected subgraph and~${\cal S}'$ is an independent set of size at least~2.
Hence, for each~$S_j'$, we find that~$W(p_4)$ contains at least one vertex not in~${\cal S}'$ that connects~$S_j'$ to the other vertices of~${\cal S}'$.
As ${\cal S}\subseteq W(p_3)$ and $w\in W(p_5)$, such a vertex can only be in~$Q$ and we denote it by~$q(S_j')$.
Note that $q(S_j')=q(S_k')$ is possible for two vertices~$S_j'$ and~$S_k'$ with $k\neq j$.

Let~$Q_1$ be the subset of~$Q$ that contains all vertices~$q(S_j)$, so~$Q_1$ is contained in~$W(p_3)$.
Similarly, let~$Q_2$ be the subset of~$Q$ that contains all vertices~$q(S_j')$, so~$Q_2$ is contained in~$W(p_4)$.
Each hyperedge~$S_j$ contains~$q(S_j)$ due to the edges $S_jq'(S_j)$ and $q'(S_j)q(S_j)$.
Moreover, each hyperedge~$S_j$ contains~$q(S_j')$ due to the edge $S_j'q(S_j')$ and because~$S_j'$ is a copy of~$S_j$.
Hence~$S_j$ contains both an element from~$Q_1$ and an element from~$Q_2$.
Moreover, $Q_1$ and~$Q_2$ are disjoint.
Hence, $(Q_1,Q_2)$ is a $2$-colouring of $(Q,{\cal S})$ (note that there may be elements of~$Q$ not in $Q_1\cup Q_2$; we can add such elements to either~$Q_1$ or~$Q_2$ in an arbitrary way).
This completes the proof of Lemma~\ref{l-hard}.\qed
\end{proof}

Combining Lemmas~\ref{l-bipartite} and~\ref{l-hard} with the aforementioned observation on membership in \NP\ implies Theorem~\ref{t-main}.

\medskip
\noindent 
{\bf Theorem~\ref{t-main} (restated).} 
{\it $P_5$-{\sc Contractibility} is \NP-complete for bipartite graphs.}

\section{The Proof of Theorem~\ref{t-main2}}\label{s-main2}

In this section we prove Theorem~\ref{t-main2}.
We do this as follows.
Consider the graph~$G$ constructed in Section~\ref{s-main} for a given instance $(Q,{\cal S})$ of {\sc Hypergraph $2$-Colouring}. We remove the vertices~$q^*$ and~$u_2$, and instead add a new vertex~$x$ that we make adjacent to both~$v$ and~$w$.
This yields the graph $G'=(V',E')$.

Removing the edge~$vx$ from $G'$ results in the graph $G'-vx$, which is used in the hardness construction of Heggernes et al.~\cite{HHLP14}) for proving that 
$P_6$-{\sc Contractibility} is \NP-complete. 

\begin{lemma}[\cite{HHLP14}]\label{l-p}
The hypergraph $(Q,{\mathcal S})$ has a $2$-colouring if and only if $G'-vx$ contains~$P_6$ as a contraction.
\end{lemma}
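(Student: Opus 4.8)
The plan is to prove Lemma~\ref{l-p} by an argument that closely parallels the proof of Lemma~\ref{l-hard}, reusing the distance-based case analysis but adapting it to the modified graph $G'-vx$. First I would record what $G'-vx$ looks like: starting from $G$, we have deleted $q^*$ and $u_2$ and added a vertex $x$ adjacent only to $w$ (since the edge $vx$ is removed). The key structural consequences are that $v$ now has neighbourhood $\{u_1\}$ only, that $u_1$ is adjacent to $v$ and to every $S_j$, and that $x$ is a pendant vertex attached to $w$. I would then argue, exactly as in the forward direction of Lemma~\ref{l-hard}, that a $2$-colouring $(Q_1,Q_2)$ yields a $P_6$-witness structure. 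The natural assignment is
\[
W(p_1)=\{v\},\quad W(p_2)=\{u_1\},\quad W(p_3)={\cal S}\cup Q_1\cup Q',\quad W(p_4)={\cal S}'\cup Q_2,\quad W(p_5)=\{w\},\quad W(p_6)=\{x\},
\]
and I would verify connectivity of $W(p_3)$ and $W(p_4)$ using that $S_n$ and $S_n'$ reach every element of $Q$, together with nonadjacency between nonconsecutive sets, just as before.

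For the reverse direction, the approach is again to appeal to Lemma~\ref{l-outer}: a contraction to $P_6$ gives a $P_6$-suitable pair, so $W(p_1)=\{x\}$ and $W(p_6)=\{y\}$ are singletons and $x,y$ must be at distance at least $5$ in $G'-vx$. The first step is therefore to compute the diameter and the relevant pairwise distances in $G'-vx$, in analogy with Table~\ref{t-distance}. I would expect the pendant vertex $x$ at distance $1$ from $w$ to inflate distances from the $w$-side of the graph: the far endpoint $v$ (now only reachable through $u_1$) should sit at distance $5$ from $x$, forcing essentially the unique endpoint choice $\{x,y\}=\{v,x\}$ (possibly together with one or two further candidates that the distance table rules out, mirroring the three-case structure of Lemma~\ref{l-hard}). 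Having pinned down the endpoints, I would propagate the forced memberships inward: $u_1\in W(p_2)$ because it is $v$'s only neighbour, then ${\cal S}\subseteq W(p_3)$ and ${\cal S}'\subseteq W(p_4)$ by distance arguments, and $w\in W(p_5)$ as the unique neighbour of the pendant endpoint.

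The heart of the reverse direction, and the step I expect to be the main obstacle, is extracting the colour classes from the connectivity requirements on $W(p_3)$ and $W(p_4)$. Since ${\cal S}$ is an independent set of size at least two (recall $n\geq 2$), connectivity of $W(p_3)$ forces, for each $S_j$, a connecting vertex outside ${\cal S}$; with ${\cal S}'$ and $\{u_1\}$ excluded, this vertex must lie in $Q'$, and because $q^*$ has been \emph{deleted} in $G'-vx$, each such $q'(S_j)\in Q'$ now has only two possible neighbours, forcing it to attach to some $q(S_j)\in Q\cap W(p_3)$. Symmetrically, connectivity of $W(p_4)$ forces each $S_j'$ to be linked through some $q(S_j')\in Q\cap W(p_4)$. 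The deletion of $q^*$ is precisely what makes the $Q'$-vertices behave more rigidly than in $G$, and I would need to check carefully that no vertex can illegitimately double as a connector across both $W(p_3)$ and $W(p_4)$. Setting $Q_1=\{q(S_j)\}\cap W(p_3)$ and $Q_2=\{q(S_j')\}\cap W(p_4)$, these sets are disjoint (being in different witness sets) and every $S_j$ meets both, so $(Q_1,Q_2)$ extends to a $2$-colouring of $(Q,{\cal S})$, completing the proof.
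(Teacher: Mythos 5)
The paper does not prove this lemma at all: it is imported verbatim from Heggernes, van~'t~Hof, L\'ev\^eque and Paul~\cite{HHLP14}, since $G'-vx$ is exactly their hardness gadget for $P_6$-{\sc Contractibility}. Your proposal therefore cannot be ``the same as the paper's proof''; what it does is supply a self-contained proof, and that proof is essentially correct. It follows the same template as the paper's own proof of Lemma~\ref{l-hard}: the forward direction with the witness structure $\{v\},\{u_1\},{\cal S}\cup Q_1\cup Q',{\cal S}'\cup Q_2,\{w\},\{x\}$ checks out (connectivity of the two large sets uses $S_n=Q$ and the fact that the $q_iS_j'$ edges are not subdivided), and in the reverse direction the distance argument does work as you predict: $\{v,x\}$ is the unique pair at distance $5$, which is even cleaner than the three-case analysis needed for Lemma~\ref{l-hard}, and the deletion of $q^*$ makes each vertex of $Q'$ a degree-two connector, so the extraction of $Q_1\subseteq W(p_3)$ and $Q_2\subseteq W(p_4)$ goes through. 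Two cosmetic points: you reuse the letter $x$ both for the generic endpoint of the suitable pair and for the added pendant vertex, which should be disambiguated; and the claimed distance facts (in particular that no pair other than $\{v,x\}$ attains distance $5$, and that ${\cal S}\subseteq W(p_3)$, ${\cal S}'\subseteq W(p_4)$, $W(p_5)=\{w\}$ are forced) are asserted rather than tabulated, so a complete write-up would need the analogue of Table~\ref{t-distance} for $G'-vx$.
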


We continue by proving two lemmas for~$G'$ that are similar to the two lemmas of Section~\ref{s-main}.

\begin{lemma}\label{l-bipartite2}
The graph~$G'$ is bipartite.
\end{lemma}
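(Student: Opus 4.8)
The plan is to exhibit an explicit bipartition of $V'$, exactly as was done for~$G$ in Lemma~\ref{l-bipartite}. Recall that~$G'$ is obtained from~$G$ by deleting~$q^*$ and~$u_2$ and adding a new vertex~$x$ adjacent to both~$v$ and~$w$. Since deleting vertices cannot destroy bipartiteness, the only genuine issue is where to place the new vertex~$x$ and whether its two new edges~$xv$ and~$xw$ respect a two-colouring. So first I would recall the bipartition $(A,B)$ of~$G$ from Lemma~\ref{l-bipartite}, namely $A=\{q^*,v,w\}\cup{\cal S}\cup Q$ and $B=\{u_1,u_2\}\cup{\cal S}'\cup Q'$, and then simply remove the deleted vertices from each side.

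The key observation is that~$v$ and~$w$ both lie in the same class~$A$ of that original bipartition. Hence placing~$x$ into~$B$ makes both new edges~$xv$ and~$xw$ go between the two classes, as required. Concretely, I would define
\[
A'=\{v,w\}\cup{\cal S}\cup Q,\qquad
B'=\{x,u_1\}\cup{\cal S}'\cup Q',
\]
obtained from~$(A,B)$ by dropping~$q^*$ and~$u_2$ and inserting~$x$ into the second class. I would then verify that $(A',B')$ is a partition of~$V'$ and that no edge of~$G'$ has both endpoints in the same class.

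For the verification I would argue that every edge of~$G'$ is either an edge of~$G$ (among the surviving vertices) or one of the two new edges~$xv,xw$. For the inherited edges, Lemma~\ref{l-bipartite} already guarantees that each such edge crosses between~$A$ and~$B$, and since~$A'\subseteq A$ and~$B'\subseteq B\cup\{x\}$ with~$x$ incident only to the new edges, these inherited edges still cross between~$A'$ and~$B'$. For the two new edges, $x\in B'$ while~$v,w\in A'$, so both cross as well. This is a routine check rather than a real obstacle; the only thing to be careful about is to confirm that deleting~$q^*$ and~$u_2$ does not leave behind any edge whose endpoint was removed, which is automatic since we are taking an induced subgraph plus two controlled new edges. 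I therefore expect no substantive difficulty, and the proof will close by exhibiting $(A',B')$ and noting the absence of monochromatic edges, mirroring the one-line argument of Lemma~\ref{l-bipartite}.
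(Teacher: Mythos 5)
Your proposal is correct and uses exactly the same bipartition as the paper, namely $A'=\{v,w\}\cup{\cal S}\cup Q$ and $B'=\{u_1,x\}\cup{\cal S}'\cup Q'$; the paper simply states this partition and notes there are no edges within either class, while you additionally justify it via the bipartition of~$G$. No gaps.
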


\begin{proof}
We partition~$V'$ into $A'=\{v,w\} \cup {\cal S}\cup Q$ and $B'=\{u_1,x\} \cup {\mathcal S'} \cup Q'$, and note that~$G'$ contains no edge between any two vertices in~$A'$ and no edge between any two vertices in~$B'$.\qed
\end{proof}

\begin{lemma}\label{l-hard2}
The hypergraph $(Q,{\mathcal S})$ has a $2$-colouring if and only if the graph~$G'$ contains~$C_6$ as a contraction.
\end{lemma}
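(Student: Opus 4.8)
The plan is to prove Lemma~\ref{l-hard2} by a reduction strategy that mirrors the proof of Lemma~\ref{l-hard}, exploiting the close relationship between~$G'$ and the graph $G'-vx$ of Heggernes et al.~\cite{HHLP14}. The key observation is that adding the edge~$vx$ to $G'-vx$ effectively ``closes'' a $P_6$ into a $C_6$: if $G'-vx$ contracts to a path $p_1\cdots p_6$ with $v$ and~$x$ as the two endpoints (i.e.\ $W(p_1)=\{v\}$ and $W(p_6)=\{x\}$ or vice versa), then the same witness structure realizes a contraction of~$G'$ to~$C_6$, since the extra edge~$vx$ supplies exactly the missing edge between the two end witness sets. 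Conversely, a $C_6$-witness structure of~$G'$ in which~$v$ and~$x$ end up in non-consecutive witness sets along the cycle would, after deleting~$vx$, yield a $P_6$-contraction of $G'-vx$.

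First I would establish the forward direction. Assuming $(Q,{\cal S})$ has a $2$-colouring, I would write down an explicit $C_6$-witness structure of~$G'$, adapting the witness sets used in Lemma~\ref{l-hard}. Concretely, for a $2$-colouring $(Q_1,Q_2)$ I would try the assignment that places~$v$ and~$x$ in adjacent positions of the cycle and distributes~$\{u_1\}$, ${\cal S}\cup Q_1\cup Q'$, ${\cal S}'\cup Q_2$, and~$\{w\}$ into the remaining four positions, then verify the three witness-structure conditions: each induced subgraph is connected (using that $S_n=Q$ and $S_n'$ connect to all of~$Q$), the sets partition~$V'$, and consecutive sets are joined by an edge while non-consecutive ones are not. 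The presence of the edge~$vx$ is what lets the two ends~$v$ and~$x$ be adjacent in the cycle, closing it up; I would check carefully that~$x$ is adjacent only to~$v$ and~$w$, so no spurious chords appear between non-consecutive witness sets.

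The reverse direction is where I expect the main obstacle, and it is the heart of the lemma. Given a $C_6$-witness structure of~$G'$, unlike in the path case I cannot start from a fixed pair of singleton endpoints supplied by a suitability lemma, since~$C_6$ has no distinguished endpoints. I would therefore first argue structurally about where~$v$, $x$, $w$, and the sets~${\cal S}$, ${\cal S}'$, $Q$, $Q'$ must land, again using a distance/degree analysis of~$G'$ analogous to Table~\ref{t-distance} but now adapted to the modified graph (with~$q^*$ and~$u_2$ removed and~$x$ added). The critical points are that~$x$ is a cut-ish vertex of low degree whose only neighbours are~$v$ and~$w$, that every vertex of~${\cal S}$ is complete to~${\cal S}'$, and that connectivity of witness sets forces~$u_1$, the $Q'$-vertices, and the hyperedge-copy vertices into specific positions. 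Once the six witness sets are pinned down to essentially the form used above, the colouring $(Q_1,Q_2)$ is read off exactly as in Case~3 of Lemma~\ref{l-hard}: the vertices of~$Q$ appearing in the two ``middle'' sets adjacent to the~${\cal S}$-set and the~${\cal S}'$-set respectively give $Q_1$ and~$Q_2$, and the connectivity requirements for each~$S_j$ and each~$S_j'$ guarantee that every hyperedge meets both colour classes.

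The hard part will be the case analysis in the reverse direction: ruling out all ``degenerate'' placements of~$v$, $x$, and~$w$ around the cycle that do not correspond to a valid $2$-colouring, and in particular showing that~$v$ and~$x$ (or~$x$ and~$w$) must occupy consecutive positions so that deleting~$vx$ recovers the $P_6$ situation. I would lean on Lemma~\ref{l-p}, transferring as much of the Heggernes et al.\ analysis as possible: any $C_6$-witness structure of~$G'$ whose~$v$ and~$x$ sit in non-adjacent positions would survive the removal of the edge~$vx$ as a $P_6$-contraction of $G'-vx$, which by Lemma~\ref{l-p} already encodes a $2$-colouring, while the remaining (adjacent) configurations are handled directly by the forward-style bookkeeping. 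Reconciling these two regimes cleanly, rather than redoing the entire distance table from scratch, is the delicate step.
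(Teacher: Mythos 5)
Your forward direction matches the paper: the explicit witness structure $\{v\},\{u_1\},{\cal S}\cup Q_1\cup Q',{\cal S}'\cup Q_2,\{w\},\{x\}$ around the cycle is exactly what is used, and your check that $x$ has only the neighbours $v$ and $w$ is the right thing to verify. The problem is in your reverse direction, where the case split you describe is inverted. You claim that a $C_6$-witness structure in which $v$ and $x$ lie in \emph{non-consecutive} witness sets would, after deleting $vx$, yield a $P_6$-contraction of $G'-vx$, and that the \emph{consecutive} configurations must instead be handled by redoing the forward-style bookkeeping. Neither half of this is right. A $C_6$-witness structure with $v$ and $x$ in non-consecutive sets cannot exist at all, precisely because $vx$ is an edge of $G'$ and non-consecutive witness sets must have no edge between them; and deleting one edge does not in general turn a $C_6$-witness structure into a $P_6$-witness structure. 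Conversely, it is exactly the consecutive case that reduces to Lemma~\ref{l-p}: once you know $W(c_1)=\{v\}$ and $W(c_6)=\{x\}$ are adjacent \emph{singletons}, the edge $vx$ is the unique edge between those two witness sets, so removing it breaks the cycle into a path and the very same partition is a $P_6$-witness structure of $G'-vx$, at which point Lemma~\ref{l-p} hands you the $2$-colouring with no further bookkeeping.

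You are also missing the one concrete observation that makes pinning down the witness sets short rather than a long case analysis: $v$ is the \emph{only} vertex of $G'$ at distance at least $3$ from $S_n'$ (since $S_n=Q$ forces $S_n'$ to be adjacent to every vertex of $Q$). In a $C_6$-witness structure the set opposite the one containing $S_n'$ must consist entirely of vertices at distance at least $3$ from $S_n'$, so it equals $\{v\}$. Then $v$'s only neighbours $u_1$ and $x$ must populate the two sets adjacent to $\{v\}$, and since $x$'s only other neighbour $w$ is adjacent to $S_n'$, the set containing $x$ cannot grow beyond $\{x\}$ and $w$ is forced into the set next to it. That four-sentence argument replaces the ``delicate step'' you flag, and it is what lets the whole reverse direction collapse onto Lemma~\ref{l-p} rather than onto a reprise of Case~3 of Lemma~\ref{l-hard}.
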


\begin{proof}
Let $C$ be a cycle on six vertices $c_1,\ldots, c_6$ in that order.
First suppose that $(Q,{\mathcal S})$ has a $2$-colouring $(Q_1,Q_2)$.
We define the following witness sets:
$W(c_1)=\{v\}$,
$W(c_2)=\{u_1\}$,
$W(c_3)={\cal S}\cup Q_1 \cup\nobreak Q'$,
$W(c_4)= {\cal S}'\cup Q_2$,
$W(c_5)=\{w\}$ and
$W(c_6)=\{x\}$.
The sets $W(c_1),\ldots,W(c_6)$ are readily seen to form a $C_6$-witness structure of~$G'$.

Now suppose that~$G'$ contains~$C_6$ as a contraction.
The only vertex of distance at least~$3$ from~$S_n'$ in~$G'$ is~$v$ (in particular recall that~$S_n'$ is adjacent to every vertex of~$Q$).
Hence we may assume without loss of generality that $W(c_1)=\{v\}$ and $S_n'\in W(c_4)$.
Then, as the only two neighbours of~$v$ are~$u_1$ and~$x$, we may also assume without loss of generality that $u_1\in W(c_2)$ and $x\in W(c_6)$.
Since~$v$ and~$w$ are the only two neighbours of~$x$, and~$w$ is a neighbour of $S_n'\in W(c_4)$, this means that $w\in W(c_5)$ and thus $W(c_6)=\{x\}$.
The fact that $W(c_1)=\{v\}$ and $W(c_6)=\{x\}$ implies that
$G'-vx$ contains~$P_6$ as a contraction and we may apply Lemma~\ref{l-p}.
\qed
\end{proof}

Combining Lemmas~\ref{l-bipartite2} and~\ref{l-hard2} with the observation that $C_6$-{\sc Contractibility} belongs to \NP\ implies Theorem~\ref{t-main2}.

\medskip
\noindent
{\bf Theorem~\ref{t-main2} (restated).} {\it The $C_6$-{\sc Contractibility} problem is \NP-complete for bipartite graphs.}

\section{Future Work}

We have proved that the $P_5$-{\sc Contractibility} problem is \NP-complete for the class of bipartite graphs, which strengthens a result in~\cite{HHLP14}, where \NP-completeness was shown for $P_6$-{\sc Contractibility} restricted to bipartite graphs.
As $P_3$-{\sc Contractibility} is readily seen to be polynomial-time solvable for general graphs, this leaves us with one stubborn open case, namely $P_4$-{\sc Contractibility}. 

\begin{open}\label{p-1}
Determine the complexity of $P_4$-{\sc Contractibility} for bipartite graphs.
\end{open}

One approach for settling Open Problem~\ref{p-1} would be to first consider {\it chordal bipartite} graphs, which are bipartite graphs in which every induced cycle has length~$4$.
We believe that this is an interesting question on its own.

\begin{open}
\begin{sloppypar}Determine the complexity of $P_4$-{\sc Contractibility} for chordal bipartite graphs.\end{sloppypar}
\end{open}
We also proved that the $C_6$-{\sc Contractibility} problem is \NP-complete for bipartite graphs, which implied that determining the cyclicity of a bipartite graph is \NP-hard.
As mentioned, $C_3$-{\sc Contractibility} is polynomial-time solvable for general graphs.
This leaves us with the following two open cases.

\begin{open}
Determine the complexity of $C_k$-{\sc Contractibility} for bipartite graphs when $4\leq k\leq 5$.
\end{open}
The $2$-{\sc Disjoint Connected Subgraphs} problem takes as input a graph~$G$ and two disjoint subsets~$Z_1$ and~$Z_2$ of~$V(G)$.
It asks whether~$V(G)$ can be partitioned into sets~$A_1$ and~$A_2$, such that $Z_1\subseteq A_1$, $Z_2\subseteq A_2$ and both~$A_1$ and~$A_2$ induce connected subgraphs of~$G$.
Telle and Villanger~\cite{TV13} gave an $O^*(1.7804^n)$-time algorithm for solving this problem, which is known to be \NP-complete even if $|Z_1|=2$~\cite{HPW09}.
Here, the~$O^*$ notation suppresses factors of polynomial order.

By using the algorithm of~\cite{TV13} as a subroutine and Lemma~\ref{l-outer} we immediately obtain an $O^*(1.7804^n)$-time algorithm for solving $P_4$-{\sc Contractibility} on general $n$-vertex graphs.
That is, we guess two non-adjacent vertices~$u$ and~$v$ with non-intersecting neighbourhoods~$N(u)$ and~$N(v)$ to be candidates for a $P_4$-suitable pair and then solve the $2$-{\sc Disjoint Connected Subgraphs} problem for the graph $G- \{u,v\}$ with $Z_1=N(u)$ and $Z_2=N(v)$ (note that we need to consider at most $\binom{n}{2}$ choices of pairs $u,v$).

\begin{proposition}
$P_4$-{\sc Contractibility} can be solved in $O^*(1.7804^n)$ time for (general) graphs on~$n$ vertices.
\end{proposition}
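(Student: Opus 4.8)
The plan is to combine Lemma~\ref{l-outer} with the $O^*(1.7804^n)$-time algorithm of Telle and Villanger~\cite{TV13} for $2$-{\sc Disjoint Connected Subgraphs}, exactly along the lines sketched in the paragraph preceding the statement. By Lemma~\ref{l-outer}, $G$ is $P_4$-contractible if and only if it has a $P_4$-suitable pair, so it suffices to decide, for each candidate pair $(u,v)$, whether $G$ admits a $P_4$-witness structure $\mathcal{W}$ with $W(p_1)=\{u\}$ and $W(p_4)=\{v\}$, where $P_4=p_1p_2p_3p_4$. First I would argue that only $O(n^2)$ pairs need be tested and that each test reduces to a single $2$-{\sc Disjoint Connected Subgraphs} instance on $G-\{u,v\}$.

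Next I would establish the structural facts that drive the reduction. In any witness structure with $W(p_1)=\{u\}$ and $W(p_4)=\{v\}$, the non-adjacency of $p_1$ with both $p_3$ and $p_4$ in $P_4$ forces every neighbour of $u$ into $W(p_2)$, and symmetrically every neighbour of $v$ into $W(p_3)$. Hence $u$ and $v$ are non-adjacent, their neighbourhoods $N(u)$ and $N(v)$ are disjoint (since $W(p_2)$ and $W(p_3)$ are), and both are non-empty (as $W(p_1)W(p_2)$ and $W(p_3)W(p_4)$ must carry an edge). So I would enumerate only those pairs $(u,v)$ with these three properties and, for each, call the algorithm on $G-\{u,v\}$ with $Z_1=N(u)$ and $Z_2=N(v)$, identifying the output partition $(A_1,A_2)$ with $(W(p_2),W(p_3))$.

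The verification step is where the remaining care is needed, although each point is short. I would check that a valid $2$-{\sc Disjoint Connected Subgraphs} solution yields a genuine $P_4$-witness structure: connectivity of $W(p_2)$ and $W(p_3)$ is guaranteed by the algorithm; the forbidden edges (between $W(p_1)$ and $W(p_3)$, and between $W(p_2)$ and $W(p_4)$) are automatically absent because $N(u)\subseteq A_1$ and $N(v)\subseteq A_2$ confine all neighbours of $u$ and $v$ to the correct side; and the required edge between $W(p_2)$ and $W(p_3)$ follows from the connectedness of $G$, since otherwise $\{u\}\cup A_1$ and $\{v\}\cup A_2$ would be distinct components. (I would first discard any disconnected $G$, which can never contract to the connected graph $P_4$.) The main obstacle is therefore not algorithmic but this exact matching between the witness-structure constraints of Lemma~\ref{l-outer} and the constraints of $2$-{\sc Disjoint Connected Subgraphs}; once it is in place, both directions of the correctness argument follow, and the running time is $\binom{n}{2}\cdot O^*(1.7804^{n-2})=O^*(1.7804^n)$.
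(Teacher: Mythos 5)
Your proposal is correct and follows essentially the same route as the paper: Lemma~\ref{l-outer} reduces the problem to testing $P_4$-suitable pairs, and each candidate pair $(u,v)$ of non-adjacent vertices with disjoint neighbourhoods is tested by one call to the Telle--Villanger algorithm on $G-\{u,v\}$ with $Z_1=N(u)$ and $Z_2=N(v)$, giving $\binom{n}{2}\cdot O^*(1.7804^{n})$ time overall. The paper leaves the correctness of the correspondence between witness structures and $2$-{\sc Disjoint Connected Subgraphs} solutions implicit, whereas you spell it out; your verification is sound.
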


The proof of the aforementioned \NP-completeness result for $2$-{\sc Disjoint Connected Subgraphs} in~\cite{HPW09} can be modified to hold for bipartite graphs by subdividing each edge in the hardness construction.
This brings us to our final open problem.

\begin{open}
Does there exist an exact algorithm for $P_4$-{\sc Contractibility} for bipartite graphs on~$n$ vertices that is faster than $O^*(1.7804^n)$ time?
\end{open}


\begin{thebibliography}{99}\label{bibliography}
\bibitem{BBDG80}
H. Bell, E. Brown, R.F. Dickman and E.L. Green, Circularity of graphs and continua: combinatorics, Houston Journal of Mathematics 6 (1980) 455--469.

\bibitem{BGHHKP14}
R. Belmonte, P.A. Golovach, P. Heggernes, P. van 't Hof, M. Kami\'nski, D. Paulusma, Detecting fixed patterns in chordal graphs in polynomial time, Algorithmica 69 (2014) 501--521.

\bibitem{Bl82}
D. Blum, Circularity of graphs, PhD Thesis, Virginia Polytechnic Institute and State University, 1982.

\bibitem{BV87}
A.E. Brouwer and H.J.Veldman,
Contractibility and NP-completeness,
Journal of Graph Theory 11 (1987) 71--79.

\bibitem{DP17}
K.K. Dabrowski and D. Paulusma, Contracting Bipartite Graphs to Paths and Cycles, Proc. EuroComb 2017, Electronic Notes in Discrete Mathematics, to appear.

\bibitem{FKP13}
J. Fiala, M. Kami\'nski and D. Paulusma, A note on contracting claw-free graphs, Discrete Mathematics \& Theoretical Computer Science 15 (2013) 223--232.

\bibitem{FKMP95}
M.R. Fellows, J.~Kratochv\'il, M.~Middendorf, and F.~Pfeiffer,
The complexity of induced minors and related problems,
Algorithmica 13 (1995) 266--282.

\bibitem{GJ79}
M.R.~Garey and D.S.~Johnson,
Computers and Intractability, 
W.H.~Freeman and Co., New York, 1979.

\bibitem{GKP13}
P.A. Golovach, D. Kratsch and D. Paulusma, Detecting induced minors in AT-free graphs, Theoretical Computer Science 482 (2013) 20--32.

\bibitem{Ha99}
R. Hammack. Cyclicity of graphs, Journal of Graph Theory 32 (1999)160--170.

\bibitem{Ha02} 
R. Hammack,
A note on the complexity of computing cyclicity, Ars Combinatoria 63 (2002) 89--95.

\bibitem{HHLP14}
P. Heggernes, P. van 't Hof, B. L{\'{e}}v{\^{e}}que and C. Paul,
Contracting chordal graphs and bipartite graphs to paths and trees,
Discrete Applied Mathematics 164 (2014) 444--449.

\bibitem{HKPST12}
P. van 't Hof, M. Kami\'nski, D. Paulusma, S. Szeider and D.M. Thilikos, On graph contractions and induced minors, Discrete Applied Mathematics 160 (2012), 799--809.

\bibitem{HPW09}
P.~van 't Hof, D.~Paulusma and G.J. Woeginger,
Partitioning graphs in connected parts,
Theoretical Computer Science 410 (2009) 4834--4843.

\bibitem{KPT10}
M. Kami\'nski, D. Paulusma and D.M. Thilikos, Contractions of planar graphs in polynomial time, Proc. ESA 2010, LNCS 6346 (2010) 122--133.

\bibitem{RS95}
N.~Robertson and P.D. Seymour, Graph minors. XIII. The disjoint paths problem,
Journal of Combinatorial Theory, Series B 63 (1995) 65--110.

\bibitem{TV13}
J.A. Telle, Y. Villanger,
Connecting terminals and 2-disjoint connected subgraphs, Proc. WG 2013, LNCS 8165 (2013) 418-428

\end{thebibliography}
\end{document}